\pgfplotsset{compat=1.16}
\def\E#1{\mathbb{E}\left[#1\right]}
\def\ker{\mathrm{ker}}
\def\E{\mathbb{E}}
\def\H{\mathbb{H}}
\def\S{\mathbb{S}}
\def\N{\mathcal{N}}
\def\rank{\mathrm{rank}}
\def\Normal{\mathrm{Normal}}
\def\darrow{\xrightarrow{d}}
\def\parrow{\xrightarrow{p}}
\def\dim{\mathrm{dim}}
\def\vec{\mathrm{vec}}
\def\lim{\mathrm{lim}}
\def\Var{\mathrm{Var}}
\def\hat{\widehat}
\numberwithin{equation}{section}
\newtheorem{theoremx}{Theorem}
\newtheorem{assumption}{Assumption}
\newtheorem{test}{Test}
\newtheorem{definition}[theoremx]{Definition}
\newtheorem{lemmax}{Lemma}
\newtheorem{propx}{Proposition}
\theoremstyle{definition}
\newtheorem{remarkx}{Remark}
\renewenvironment{proof}[1][\proofname]{%
  \par\pushQED{\qed}\normalfont%
  \topsep6\p@\@plus6\p@\relax
  \trivlist\item[\hskip\labelsep\bfseries#1\@addpunct{.}]%
  \ignorespaces
}{%
  \popQED\endtrivlist\@endpefalse
}
\renewcommand*{\@fnsymbol}[1]{%
  \ensuremath{%
    \ifcase#1\or
      \flat\or          % 1
      *\or              % 2
      \dagger\or        % 3
      \ddagger\or       % 4
      \mathsection\or   % 5
      \mathparagraph\or % 6
      \|\or             % 7
      **\or             % 8
      \dagger\dagger\or % 9
      \ddagger\ddagger  % 10
    \else
      \@ctrerr
    \fi
  }%
}
\title{\textbf{Principal component analysis in econometrics:\\a selective inference perspective}%
\thanks{The authors would like to thank Yoshihiko Nishiyama, Takahide Yanagi, Yuta Okamoto and Shunsuke Imai for helpful comments and discussions. All errors are our own. An earlier version of this paper circulated under the title ``Estimating the true number of principal components under the random design.''}}
\author{
    Yasuyuki Matsumura\thanks{Graduate School of Economics, Kyoto University. Email: \href{mailto:matsumura.yasuyuki.w85@kyoto-u.jp}{matsumura.yasuyuki.w85@kyoto-u.jp}.}
    \and
    Chisato Tachibana\thanks{Graduate School of Economics, University of Tokyo. Email: \href{mailto:tachibana-chisato215@g.ecc.u-tokyo.ac.jp}{tachibana-chisato215@g.ecc.u-tokyo.ac.jp}.}
}
\begin{document}
\maketitle

%%%%%%%%%%%%%%%%%%%%%%%%%%%%%%%%%%%%%%%%%%%%%%%%%%%%%%%%%%%%%%%%%%%%%%%
\begin{abstract}
We study the long-standing problem of determining the number of principal components in econometric applications from a selective inference perspective. We consider i.i.d.\ observations from a $p$-dimensional random vector with $p<n$ and define the ``true'' dimensionality as the rank of the population covariance matrix. Building on the sequential testing viewpoint, we propose a data-driven procedure that estimates $\rank(\Sigma_X)$ using a statistic that depends on the eigenvalues of the sample covariance matrix. While the test statistic shares the functional form of its fixed design counterpart \cite{Choi-Taylor-Tibshirani-2017}, our analysis departs from the non-stochastic setting by treating the design as random and by avoiding parametric Gaussian assumptions. Under a locally defined null hypothesis, we establish asymptotically exact type~I error controls in the sequential testing procedure, with simulation results indicating empirical validity of the proposed method.
\end{abstract}

%%%%%%%%%%%%%%%%%%%%%%%%%%%%%%%%%%%%%%%%%%%%%%%%%%%%%%%%%%%%%%%%%%%%%%%
\section{Introduction}\label{sec:intro}

Datasets with a large number of covariates are now commonplace across the economics and social sciences. A recurring challenge in such situations is multicollinearity: when predictors are strongly correlated, standard inferential and predictive procedures can be unstable and difficult to interpret. Exploratory data analysis (EDA) therefore plays an important role in revealing lower-dimensional structure and guiding subsequent modeling decisions. EDA spans a spectrum from informal, trial-and-error practices (sometimes shading into ``$p$-hacking'') to formal, criterion-based approaches. Examples of the latter include model selection via information criteria \citep{Akaike-1973,Schwarz-1978} and regularization methods such as the lasso \citep{Tibshirani-1996}.

In this paper we focus on principal component analysis (PCA), a classical tool for dimension reduction dating back to \citet{Pearson-1901} and \citet{Hotelling-1933}. PCA summarizes the dominant directions of variability in a multivariate distribution and often serves as a first step for both interpretation and downstream tasks such as regression, clustering, or forecasting. Beyond its methodological elegance, PCA is widely used in practice. In development economics, for instance, researchers routinely construct asset-based wealth indices using PCA when direct income or expenditure data are unavailable; see, among many others, \citet{Filmer-Pritchett-2001}, \citet{McKenzie-2005}, and \citet{Mbiti-Muralidharan-Romero-Schipper-Manda-Rajani-2019}. The original covariates in these applications typically comprise housing characteristics, access to infrastructure, and ownership of durable goods, whereas the leading principal component is interpreted as a wealth index or a measure of household economic status. In macroeconomic forecasting, factor-augmented models exploit PCA to condense information from large panels (often 100-200 time series spanning prices, consumption, production, employment, and other indicators) with the resulting principal components improving predictive performance; see, for example, \citet{Stock-Watson-2002b} and \citet{Bernanke-Boivin-Eliasz-2005}.

Despite its ubiquity, a central question in PCA remains unsettled: how many principal components should be retained? Numerous rules-of-thumb and formal procedures have been proposed, but no community-wide standard has emerged. Researcher-dependent practices include selecting the smallest number of components explaining a pre-specified fraction (often 70--90\%) of total variation; applying Kaiser's rule to retain components with variance greater than one \citep{Kaiser-1960}; or visually inspecting a scree plot \citep{Cattell-1966}. Formalized methods include distribution-based tests such as \citet{Bartlett-1950}, information criteria \citep{Bai-Ng-2003}, and eigenvalue-based methods \citep{Anderson-1963,Waternaux-1976,Muirhead-1982,Tyler-1983,Johnstone-2001,Kleibergen-Paap-2006,Kritchman-Nadler-2009,Hallin-Paindaveine-Verdebout-2010,Onatski-2010,Ahn-Horenstein-2013}. A comprehensive summary is provided, for example, in \citet[Chapter~6]{Jolliffe-2002}.

In this paper, we consider an i.i.d.\ sample $\{X_i\}_{i=1}^n$ from a $p$-dimensional random vector $X\in\mathbb{R}^p$ with $p<n$ for some fixed $p$. We take the true number of underlying principal components to be the rank of the population covariance matrix $\Sigma_X=\E[X_i X_i^{\top}]$. Using a test statistic that depends on the eigenvalues of the sample covariance matrix
$\hat{\Sigma}_X = n^{-1} \sum_{i=1}^n X_i X_i^{\top}$,
we estimate $\rank(\Sigma_X)$ via a sequential hypothesis testing procedure. Without imposing Gaussian assumptions on $X$, we establish that the resulting procedure achieves type~I error controls asymptotically under a locally defined null hypothesis.

%%%%%%%%%%%%%%%%%%%%%%%%%%%%%%%%%%%%%%%%%%%%%%%%%%%%%%%%%%%%%%%%%%%%%%%
\subsection*{Plan of the article}

The remainder of this section reviews related literature. Section~\ref{sec:theory} introduces the setup and assumptions and formalizes our testing procedure. Section~\ref{sec:simulation} presents a simulation study that illustrates the performance of the method. 
%Section~\ref{sec:empirical} demonstrates the practical usefulness of our approach in several empirical applications. 
Section~\ref{sec:conclusions} concludes.

%%%%%%%%%%%%%%%%%%%%%%%%%%%%%%%%%%%%%%%%%%%%%%%%%%%%%%%%%%%%%%%%%%%%%%%
\subsection*{Related literature}

Our approach connects to several strands of work and, conceptually, aligns with the sequential testing procedure for choosing the number of principal components under the fixed design \citep{Choi-Taylor-Tibshirani-2017}. While our test statistic in \eqref{eq:random-csv-stat} shares the same functional form as Test 2.1 in \citet{Choi-Taylor-Tibshirani-2017}, our theoretical analysis departs from their setting: whereas they assume a non-stochastic design and Gaussian errors, we treat the observed data as realizations of random variables and avoid Gaussian assumptions.

Our method is also aligned with frameworks for sequential hypothesis testing \citep{GSell-Wager-Chouldechova-Tibshirani-2016}. Methodologically, it relates to the ``Kac-Rice'' test theory \citep{Taylor-Loftus-Tibshirani-2016}, and the viewpoint of selective conditional inference \citep{Lee-Sun-Sun-Taylor-2016,Tibshirani-Taylor-Lockhart-Tibshirani-2016}. Our analysis further draws on random matrix theory and matrix perturbation theory \citep{Bhatia-1997,Horn-Johnson-2013,Yu-Wang-Samworth-2015} and on ideas from the theory of statistical experiments \citep[Chapters~6-9]{van-der-Vaart-1998}.

A closely related dimension-reduction framework is factor analysis (FA) \citep{Spearman-1904,Thurstone-1947}. A large literature studies data-driven selection of the number of factors, especially in macroeconomic and financial panels; see, for example, \citet{Stock-Watson-2002a}, \citet{Bai-Ng-2003,Bai-Ng-2007}, \citet{Hallin-Liska-2007}, \citet{Kritchman-Nadler-2008}, \citet{Onatski-2010}, \citet{Ahn-Horenstein-2013}, \citet{Trapani-2018}, and \citet{Fan-Ke-Sun-Zhou-2019}. 

Finally, when the dimension of covariates at hand exceeds the sample size ($p>n$), sparse variants of PCA and FA have been developed to ensure identifiability and interpretability; see \citet{Zou-Hastie-Tibshirani-2006}, \citet{Bai-Li-2012}, \citet{Caner-Han-2014}, \citet{Fan-Ke-Wang-2020}, \citet{Qiu-Li-Yao-2024}, and \citet{Li-Zhang-2025}.

%%%%%%%%%%%%%%%%%%%%%%%%%%%%%%%%%%%%%%%%%%%%%%%%%%%%%%%%%%%%%%%%%%%%%%%
\section{Theoretical analysis}\label{sec:theory}
Suppose that we observe an i.i.d.\ sample $\{X_i\}_{i=1}^n$ from the random vector $ X \in \mathbb{R}^p$ in the population. Here, $p$ is a given constant and we assume $p < n$. For simplicity and without loss of generality, we assume that $\mathbb{E}[X_i] = 0$. We denote
\[
\Sigma_X = \mathbb{E}[X_i X_i^{\top}], \qquad 
\hat{\Sigma}_X = \frac{1}{n} \sum_{i=1}^n X_i X_i^{\top}.
\]
Furthermore, let $\lambda_1 \geq \cdots \geq \lambda_p$ be the population eigenvalues of $\Sigma_X$ sorted in descending order, and let $\hat{\lambda}_1 \geq \cdots \geq \hat{\lambda}_p$ be the sample eigenvalues of $\hat{\Sigma}_X$ sorted in descending order.
Our target is
\[
k_0 = \operatorname{rank}(\Sigma_X).
\]
To estimate $k_0$, the analyst sequentially conducts tests of the following nested hypotheses: for each step $k = 1, \ldots, p-1$,
\begin{align}\label{eq:null-and-alter-hypo}
\H_{0,k} : \operatorname{rank}(\Sigma_X) \leq k-1
\quad \text{vs} \quad
\H_{1,k} : \operatorname{rank}(\Sigma_X) \geq k.
\end{align}
For testing \eqref{eq:null-and-alter-hypo}, we propose the following sequential testing procedure:

\begin{test}\label{test:proposed-testing-procedure}
Define the test statistic as follows:
\begin{align}
                        \hat{\S}_{k,0} = \frac{
                              \int_{\hat{\lambda}_k}^{\hat{\lambda}_{k-1}} \exp\left({-\frac{nu}{2 \hat{\sigma}_k^2}}\right) u^{\frac{n-p-1}{2}} \prod_{j \neq k}^{p} | u - \hat{\lambda}_j | du
                        }{
                              \int_{\hat{\lambda}_{k+1}}^{\hat{\lambda}_{k-1}} \exp\left({-\frac{nu}{2 \hat{\sigma}_k^2}}\right) u^{\frac{n-p-1}{2}}  \prod_{j \neq k}^{p} | u - \hat{\lambda}_j | du
                        }, 
                        \quad 
                        \hat{\lambda}_0 = \infty,
                        \quad
                        \hat{\sigma}_k^2=\frac{(1+\hat{\kappa})\sum_{j=k}^{p}\hat{\lambda}_j}{p-k+1},
                        \label{eq:random-csv-stat}
                  \end{align}   
where $\hat{\kappa}$ is a consistent estimator of the kurtosis\footnote{See, e.g., \cite{Hallin-Paindaveine-Verdebout-2010} for an appropriate definition of the kurtosis.}.
With a given level $\alpha$, we reject $\H_{0,k}$ if $\hat{\S}_{k,0} \leq \alpha$ and accept $\H_{0,k}$ otherwise.
\end{test}

The interpretation of the test statistic \eqref{eq:random-csv-stat} is quite similar to that of \citeauthor{Choi-Taylor-Tibshirani-2017} (\citeyear{Choi-Taylor-Tibshirani-2017}, Test 2.1): 
The test statistic $\hat{\S}_{k,0}$ compares the relative size of $\hat{\lambda}_k$ ranging between $(\hat{\lambda}_{k+1}, \hat{\lambda}_{k-1})$, and a small value of $\hat{\S}_{k,0}$ implies a large value of $\hat{\lambda}_k$, supporting the alternative hypothesis $\H_{1,k}$ (\citealp{Choi-Taylor-Tibshirani-2017}, p.2598). The test statistic also reflects the selection event in which the first $k-1$ null hypotheses has been rejected in the previous steps of the sequential test.

To establish an asymptotic justification for Test \ref{test:proposed-testing-procedure}, we introduce several assumptions.

\begin{assumption}[Local null structure]\label{ass:local-null}
    We assume that the null hypothesis $\H_{0,k} : \rank(\Sigma_X) \leq k-1$ can be interpreted as follows:
    \begin{itemize}
        \item for each sample size $n$, there exists a population distribution $\mathcal{P}_n$;
        \item under each $\mathcal{P}_n$, the population covariance matrix 
        $\Sigma_X^{(n)}=\E_{\mathcal{P}_n}[X_iX_i^{\top}]$ is deterministic;
        \item among the eigenvalues of $\Sigma_X^{(n)}$, those with indices $j=k,\cdots,p$ satisfy
              \begin{align}
                \lambda_j^{(n)} = \frac{\tau_j^{(n)}}{\sqrt{n}}, \quad 
                \tau_j^{(n)} \to 0, \quad
                (j=k, \cdots,p).
              \end{align}
    \end{itemize}
\end{assumption}

\begin{remarkx}[Interpretation of Assumption~\ref{ass:local-null}.]
This assumption implies that, even when it is not exactly zero, a ``weak'' signal from the population eigenvalues should be negligible in the sense that the corresponding principal component does not have significance in the data. It is natural for analysts to take this view when they would like to obtain ``principal'' components.
\end{remarkx}

\begin{assumption}[Regularity conditions]\label{ass:regularity-conditions}
    For each $n$, we impose the following assumptions.
    The observed $X_i$ ($i=1,\cdots,n$) are i.i.d.\ samples from $X$.
    Without loss of generality, assume $\E_{\mathcal{P}_n}[X_i]=0$.
    Assume that $\Sigma_X^{(n)}=\E_{\mathcal{P}_n}[X_iX_i^{\top}]$ is finite and that
    \begin{align*}
        \E_{\mathcal{P}_n}[\|X_i\|^4] < \infty.
    \end{align*}
    Furthermore, assume that
    \begin{align*}
        \Omega(\Sigma_X^{(n)}) := \E_{\mathcal{P}_n}[\vec(X_iX_i^{\top})\vec(X_iX_i^{\top})^{\top}] - \vec(\Sigma_X^{(n)})\vec(\Sigma_X^{(n)})^{\top}
    \end{align*}
    is finite and positive definite.
\end{assumption}

\begin{assumption}[Population eigenvalue gap]\label{ass:popu-eigen-gap}
    For each $n$, there exists a given constant $c_0 > 0$, which satisfies that, for each $k=1,\cdots,p-1$, under the null hypothesis $\H_{0,k}$,
    \begin{align*}
        \lambda_{j}^{(n)} - \lambda_{j+1}^{(n)} \geq c_0, \quad j=1,\cdots,k-1.
    \end{align*}
\end{assumption} 

\begin{lemmax}[LLN, CLT]\label{lemma:LLN-CLT}
Under Assumptions~\ref{ass:local-null} and \ref{ass:regularity-conditions}, we have, as $n \to \infty$,
\begin{align*}
\hat{\Sigma}_X &\parrow \Sigma_X^{(\infty)}, \\
\sqrt{n}\,\vec(\hat{\Sigma}_X - \Sigma_X^{(n)}) &\darrow \Normal(0,\Omega(\Sigma_X^{(\infty)})).
\end{align*}
\end{lemmax}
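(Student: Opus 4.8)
The plan is to treat $\{X_i\}_{i=1}^n$ as a row-wise i.i.d.\ triangular array, since under Assumption \ref{ass:local-null} the sampling distribution $\mathcal{P}_n$ --- and hence the population covariance $\Sigma_X^{(n)}$ --- varies with $n$, so the classical i.i.d.\ LLN and CLT do not apply verbatim. Throughout I would write $\Sigma_X^{(\infty)} := \lim_{n\to\infty}\Sigma_X^{(n)}$, whose existence I take as part of the local-null setup; under Assumption \ref{ass:local-null} the trailing eigenvalues satisfy $\lambda_j^{(n)} = \tau_j^{(n)}/\sqrt{n}\to 0$ for $j=k,\dots,p$, so this limit has rank at most $k-1$. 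The law of large numbers then follows from a direct second-moment argument (and is in any case implied by the CLT scaling), while the central limit theorem is handled by a Lindeberg--Feller argument for triangular arrays.

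For the law of large numbers, I would first record the deterministic convergence $\Sigma_X^{(n)}\to\Sigma_X^{(\infty)}$ just noted, and then show $\hat{\Sigma}_X-\Sigma_X^{(n)}\parrow 0$ entrywise. For each coordinate pair $(a,b)$, the $(a,b)$ entry of $\hat{\Sigma}_X$ is the sample mean of the i.i.d.\ terms $X_{i,a}X_{i,b}$ with mean $(\Sigma_X^{(n)})_{ab}$, and by Cauchy--Schwarz its variance is bounded by $n^{-1}\E_{\mathcal{P}_n}[\|X_i\|^4]$. Provided this fourth-moment term is controlled uniformly in $n$, Chebyshev's inequality gives entrywise convergence to zero in probability; combining with $\Sigma_X^{(n)}\to\Sigma_X^{(\infty)}$ yields $\hat{\Sigma}_X\parrow\Sigma_X^{(\infty)}$.

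For the central limit theorem, set $W_{i,n} := \vec(X_iX_i^{\top})-\vec(\Sigma_X^{(n)})$, so that $\sqrt{n}\,\vec(\hat{\Sigma}_X-\Sigma_X^{(n)})=n^{-1/2}\sum_{i=1}^n W_{i,n}$, where within each row the $W_{i,n}$ are i.i.d., mean zero, with covariance $\Omega(\Sigma_X^{(n)})$ by the definition in Assumption \ref{ass:regularity-conditions}. I would invoke the multivariate Lindeberg--Feller central limit theorem for triangular arrays, reducing to the scalar case by the Cram\'er--Wold device: for each fixed $c\in\mathbb{R}^{p^2}$ consider $c^\top W_{i,n}$. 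Two conditions must be verified. First, convergence of the normalized covariance, $\Omega(\Sigma_X^{(n)})\to\Omega(\Sigma_X^{(\infty)})$, which requires convergence of the relevant fourth moments along $\mathcal{P}_n$ (so that the functional $\Omega$ evaluated along $\mathcal{P}_n$ converges to its value at the limit). Second, the Lindeberg condition: since the array is row-wise i.i.d., it suffices that for every $\epsilon>0$, $\E_{\mathcal{P}_n}\!\big[(c^\top W_{1,n})^2\,\mathbf{1}\{|c^\top W_{1,n}|>\epsilon\sqrt{n}\}\big]\to 0$.

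The main obstacle is precisely this last verification, because Assumption \ref{ass:regularity-conditions} only posits finiteness of $\E_{\mathcal{P}_n}[\|X_i\|^4]$ for each fixed $n$, whereas both the covariance convergence and the Lindeberg truncation require uniform control across the array --- for instance a uniform bound $\sup_n \E_{\mathcal{P}_n}[\|X_i\|^{4+2\delta}]<\infty$ for some $\delta>0$, or equivalently uniform integrability of $\{(c^\top W_{1,n})^2\}_n$. Under such a strengthening the truncated second moment is dominated by $\E_{\mathcal{P}_n}[|c^\top W_{1,n}|^{2+\delta}]/(\epsilon\sqrt{n})^{\delta}\to 0$, so the Lindeberg condition holds and Lindeberg--Feller delivers $n^{-1/2}\sum_{i=1}^n c^\top W_{i,n}\darrow\Normal\big(0,\,c^\top\Omega(\Sigma_X^{(\infty)})c\big)$ for every $c$. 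The Cram\'er--Wold device then upgrades this to the stated multivariate convergence $\sqrt{n}\,\vec(\hat{\Sigma}_X-\Sigma_X^{(n)})\darrow\Normal(0,\Omega(\Sigma_X^{(\infty)}))$, completing the proof; note that the limiting covariance may be singular even though each $\Omega(\Sigma_X^{(n)})$ is positive definite, which causes no difficulty for the Lindeberg--Feller conclusion.
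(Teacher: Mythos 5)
Your proposal follows exactly the route the paper itself invokes: the paper omits this proof entirely, stating only that it ``follows immediately from the Cram\'er--Wold device and the Lindeberg--Feller central limit theorem,'' which is precisely the row-wise i.i.d.\ triangular-array argument you carry out (Chebyshev for the LLN, Cram\'er--Wold plus Lindeberg--Feller for the CLT). Your execution is correct, and the obstacle you flag is a real one that the paper glosses over: since $\mathcal{P}_n$ varies with $n$, the per-$n$ finiteness of $\E_{\mathcal{P}_n}[\|X_i\|^4]$ in Assumption \ref{ass:regularity-conditions} does not by itself yield the Lindeberg condition or the convergence $\Omega(\Sigma_X^{(n)}) \to \Omega(\Sigma_X^{(\infty)})$, so a uniform moment (or uniform integrability) condition across $n$, together with the existence of the limits $\Sigma_X^{(\infty)}$ and $\Omega(\Sigma_X^{(\infty)})$, must be read into the assumptions exactly as you state.
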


\begin{proof}
    We omit the proof of Lemma~\ref{lemma:LLN-CLT}, since it follows immediately from the Cramér-Wold device and the Lindeberg-Feller central limit theorem.
\end{proof}

%%%%%%%%%%%%%%%%%%%%%%%%%%%%%%%%%%%%%%%%%%%%%%%%%%%%%%%%%%%%%%%%%%%%%%%
\begin{lemmax}[Limiting behavior of active eigenvalues]\label{lemma:limiting-nonzero-eigenvalue}
    For $j=1,\cdots,k-1$, it holds that
    \begin{align*}
        \hat{\lambda}_j \parrow \lambda_j^{(\infty)} (\geq c_0 > 0).
    \end{align*}
    Furthermore,
    \begin{align*}
        \sqrt{n} (\hat{\lambda}_j - \lambda_j^{(n)}) =O_p(1).
    \end{align*}
\end{lemmax}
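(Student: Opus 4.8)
The plan is to transfer the matrix-level convergence statements of Lemma \ref{lemma:LLN-CLT} to the level of individual eigenvalues by means of eigenvalue perturbation bounds, specifically Weyl's inequality, which for symmetric matrices $A$ and $B$ asserts $|\lambda_j(A) - \lambda_j(B)| \leq \|A - B\|_{\mathrm{op}}$ for every index $j$. Since $\hat{\Sigma}_X$ and the population covariances $\Sigma_X^{(n)}, \Sigma_X^{(\infty)}$ are all symmetric, this inequality lets each $\hat{\lambda}_j$ be controlled directly by the operator-norm distance between the sample and population covariance matrices, which is precisely the quantity governed by Lemma \ref{lemma:LLN-CLT}.

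For the consistency claim, I would apply Weyl's inequality to $\hat{\Sigma}_X$ and $\Sigma_X^{(\infty)}$, obtaining $|\hat{\lambda}_j - \lambda_j^{(\infty)}| \leq \|\hat{\Sigma}_X - \Sigma_X^{(\infty)}\|_{\mathrm{op}} \leq \|\hat{\Sigma}_X - \Sigma_X^{(\infty)}\|_F$. The law of large numbers part of Lemma \ref{lemma:LLN-CLT} gives $\hat{\Sigma}_X \parrow \Sigma_X^{(\infty)}$, so the Frobenius norm on the right tends to zero in probability and $\hat{\lambda}_j \parrow \lambda_j^{(\infty)}$ follows. For the strict lower bound $\lambda_j^{(\infty)} \geq c_0$, I would pass Assumption \ref{ass:popu-eigen-gap} to the limit: the gap $\lambda_{i}^{(n)} - \lambda_{i+1}^{(n)} \geq c_0$ survives under $\Sigma_X^{(n)} \to \Sigma_X^{(\infty)}$ (again by Weyl applied to the population matrices), so that $\lambda_{k-1}^{(\infty)} \geq \lambda_k^{(\infty)} + c_0 \geq c_0$, because $\lambda_k^{(\infty)} = 0$ under $\H_{0,k}$ together with Assumption \ref{ass:local-null}. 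Monotonicity of the ordered eigenvalues then yields $\lambda_j^{(\infty)} \geq c_0$ for all $j \leq k-1$.

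For the $\sqrt{n}$-rate, I would apply Weyl's inequality instead to $\hat{\Sigma}_X$ and the finite-sample population matrix $\Sigma_X^{(n)}$, giving $\sqrt{n}\,|\hat{\lambda}_j - \lambda_j^{(n)}| \leq \sqrt{n}\,\|\hat{\Sigma}_X - \Sigma_X^{(n)}\|_{\mathrm{op}} \leq \|\sqrt{n}\,\vec(\hat{\Sigma}_X - \Sigma_X^{(n)})\|_2$, where I used that the operator norm is dominated by the Frobenius norm and that the Frobenius norm equals the Euclidean norm of the vectorization. The central limit theorem part of Lemma \ref{lemma:LLN-CLT} states that the vector inside this Euclidean norm converges in distribution to a $\Normal(0, \Omega(\Sigma_X^{(\infty)}))$ random vector; by the continuous mapping theorem its norm converges in distribution, hence is $O_p(1)$. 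This bounds $\sqrt{n}(\hat{\lambda}_j - \lambda_j^{(n)})$ and establishes the second claim.

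I expect the only genuinely delicate point to be the role of the eigenvalue gap. The crude $O_p(1)$ rate above needs no gap at all, but the separation guaranteed by Assumption \ref{ass:popu-eigen-gap} is what ensures that each leading eigenvalue $\lambda_j^{(\infty)}$ for $j \leq k-1$ is simple and bounded away both from zero and from its neighbours; this prevents eigenvalue crossings and keeps the $j$-th eigenvalue a well-identified target as $n$ grows, which is implicitly required for the limit $\lambda_j^{(n)} \to \lambda_j^{(\infty)}$ to be meaningful. The remaining care is bookkeeping, namely confirming that $\Sigma_X^{(n)} \to \Sigma_X^{(\infty)}$ at the population level so that $\lambda_j^{(n)} \to \lambda_j^{(\infty)}$, which is built into the local-null setup of Assumption \ref{ass:local-null}.
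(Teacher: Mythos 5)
Your proposal is correct and follows essentially the same route as the paper: both arguments rest on Weyl's inequality $|\hat{\lambda}_j - \lambda_j| \leq \|\hat{\Sigma}_X - \Sigma_X\|_{\mathrm{op}}$ combined with the LLN/CLT of Lemma \ref{lemma:LLN-CLT}, with the consistency claim coming from the law-of-large-numbers part and the $O_p(1)$ rate from the $\sqrt{n}$-scaled CLT part. You simply spell out details the paper leaves implicit (the Frobenius-norm domination, the passage $\Sigma_X^{(n)} \to \Sigma_X^{(\infty)}$, and the lower bound $\lambda_j^{(\infty)} \geq c_0$ via Assumption \ref{ass:popu-eigen-gap}), which is a faithful elaboration rather than a different argument.
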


\begin{proof}
    For each $n$, by Lemma~\ref{lemma:LLN-CLT} and Weyl's inequality\footnote{See e.g., \cite{Bhatia-1997}, \cite{Horn-Johnson-2013} and \cite{Yu-Wang-Samworth-2015} for the standard matrix perturbation theory.},
    \begin{align*}
        |\hat{\lambda}_j - \lambda_j^{(n)}| \leq \|\hat{\Sigma}_X - \Sigma_X^{(n)}\|_{op} \parrow 0, \quad j = 1, \cdots, p,
    \end{align*}
    from which the result follows.
\end{proof}
%%%%%%%%%%%%%%%%%%%%%%%%%%%%%%%%%%%%%%%%%%%%%%%%%%%%%%%%%%%%%%%%%%%%%%%
\begin{lemmax}[Sample eigenvalue gap]\label{lemma:sample-eigen-gap}
    For each $k=1,\cdots,p-1$, under the null hypothesis $\H_{0,k}$, for $j=1, \cdots, k-1$, it holds that
    \begin{align*}
        \hat{\lambda}_j > \hat{\lambda}_{j+1} \quad \text{with probability approaching 1}.
    \end{align*}
\end{lemmax}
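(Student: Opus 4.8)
The plan is to show that the leading sample eigenvalues inherit, with probability approaching one, the strict separation that Assumption \ref{ass:popu-eigen-gap} imposes on the corresponding population eigenvalues, by controlling the sampling fluctuation uniformly through the same Weyl-inequality argument already used in the proof of Lemma \ref{lemma:limiting-nonzero-eigenvalue}. First I would record the perturbation bound: by Weyl's inequality, for every $j=1,\ldots,p$ and every $n$,
\[
|\hat{\lambda}_j - \lambda_j^{(n)}| \leq \|\hat{\Sigma}_X - \Sigma_X^{(n)}\|_{op}.
\]
Writing $\varepsilon_n := \|\hat{\Sigma}_X - \Sigma_X^{(n)}\|_{op}$, Lemma \ref{lemma:LLN-CLT} (together with the convergence $\Sigma_X^{(n)} \to \Sigma_X^{(\infty)}$ already invoked in the proof of Lemma \ref{lemma:limiting-nonzero-eigenvalue}) gives $\varepsilon_n \parrow 0$.

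Next, for a fixed index $j \in \{1,\ldots,k-1\}$ I would decompose the sample gap as
\[
\hat{\lambda}_j - \hat{\lambda}_{j+1}
= \bigl(\lambda_j^{(n)} - \lambda_{j+1}^{(n)}\bigr)
+ \bigl(\hat{\lambda}_j - \lambda_j^{(n)}\bigr)
- \bigl(\hat{\lambda}_{j+1} - \lambda_{j+1}^{(n)}\bigr).
\]
The first term is bounded below by $c_0$ via Assumption \ref{ass:popu-eigen-gap}, while the two remaining terms are each controlled by the perturbation bound, yielding
\[
\hat{\lambda}_j - \hat{\lambda}_{j+1} \geq c_0 - 2\varepsilon_n.
\]
Consequently $\{\hat{\lambda}_j > \hat{\lambda}_{j+1}\} \supseteq \{2\varepsilon_n < c_0\}$, and since $\varepsilon_n \parrow 0$ the latter event has probability tending to one. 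Because there are only finitely many indices $j=1,\ldots,k-1$, a union bound over the at most $p-1$ complementary events then delivers the joint separation with probability approaching one.

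The step requiring the most care is not the perturbation control, which is routine, but the verification that Assumption \ref{ass:popu-eigen-gap} genuinely supplies the lower bound at the boundary index $j=k-1$, where a leading eigenvalue bounded away from zero is compared against the trailing eigenvalue $\lambda_k^{(n)} = \tau_k^{(n)}/\sqrt{n} \to 0$. This is precisely the point at which the local null structure of Assumption \ref{ass:local-null} and the separation condition of Assumption \ref{ass:popu-eigen-gap} must be reconciled, so I would state explicitly that the gap $\lambda_{k-1}^{(n)} - \lambda_k^{(n)} \geq c_0$ is the instance of Assumption \ref{ass:popu-eigen-gap} that drives the $j=k-1$ case; everything else reduces to the uniform bound $\varepsilon_n \parrow 0$ above.
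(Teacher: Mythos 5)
Your proposal is correct and takes essentially the same route as the paper, whose proof of this lemma is a one-liner citing Assumption \ref{ass:popu-eigen-gap} and Lemma \ref{lemma:limiting-nonzero-eigenvalue}: the population gap of at least $c_0$ plus the Weyl-inequality perturbation bound $|\hat{\lambda}_j - \lambda_j^{(n)}| \leq \|\hat{\Sigma}_X - \Sigma_X^{(n)}\|_{op} \parrow 0$. You have merely made explicit the gap decomposition, the $c_0 - 2\varepsilon_n$ lower bound, and the finite union bound that the paper leaves implicit.
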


\begin{proof}
    Follows from Assumption~\ref{ass:popu-eigen-gap} and Lemma \ref{lemma:limiting-nonzero-eigenvalue}.
\end{proof}
%%%%%%%%%%%%%%%%%%%%%%%%%%%%%%%%%%%%%%%%%%%%%%%%%%%%%%%%%%%%%%%%%%%%%%%
\begin{definition}[Null space]
    Under the null hypothesis $\H_{0,k} : \rank(\Sigma_X) \leq k-1$ for each $k=1,\cdots,p-1$, we define the null space of $\Sigma_X$ as
    \begin{align*}
        \N_k = \ker(\Sigma_X^{(\infty)}).
    \end{align*}
\end{definition}

We denote by $P_k$ the projection matrix onto the null space $\N_k$.
By construction, it follows that under the null hypothesis $\H_{0,k}$,
we have $d \equiv \dim(\N_k) = p-k+1$.
%%%%%%%%%%%%%%%%%%%%%%%%%%%%%%%%%%%%%%%%%%%%%%%%%%%%%%%%%%%%%%%%%%%%%%%
\begin{lemmax}[Projected CLT]\label{lemma:projected-CLT}

    For each $k=1,\cdots,p-1$, under the null hypothesis $\H_{0,k}$,
    \begin{align*}
    \sqrt{n} \, \vec \left(P_k (\hat{\Sigma}_X - \Sigma_X^{(n)}) P_k\right) \darrow \vec( G_{\N_k} )       
    \end{align*}
    holds, where $G_{\N_k}$ is a matrix defined on $\N_k$ and
    \begin{align*}
        \vec(G_{\N_k}) \sim \Normal(0, \Xi_{\N_k}), \quad \Xi_{\N_k}=(P_k \otimes P_k) \Omega(\Sigma_X^{(\infty)}) (P_k \otimes P_k)^{\top}.
    \end{align*}
\end{lemmax}

\begin{proof}
     By Lemma~\ref{lemma:LLN-CLT} and the continuous mapping theorem,
    \begin{align*}
        \sqrt{n} \, \vec \left(P_k (\hat{\Sigma}_X - \Sigma_X^{(n)}) P_k\right) 
            &= (P_k \otimes P_k)\sqrt{n} \, \vec\left(\hat{\Sigma}_X - \Sigma_X^{(n)}\right) \\
            &\darrow \Normal\left(0, (P_k \otimes P_k)\Omega(\Sigma_X^{(\infty)})(P_k \otimes P_k)^{\top}\right),
    \end{align*}
    which proves the claim.
\end{proof}

%%%%%%%%%%%%%%%%%%%%%%%%%%%%%%%%%%%%%%%%%%%%%%%%%%%%%%%%%%%%%%%%%%%%%%%
\begin{lemmax}[Limiting behavior of null eigenvalues]\label{lemma:limiting-null-eigenvalue}
    Let the eigenvalues of $G_{\N_k}$, sorted in descending order, be denoted by $\mu_1 > \cdots > \mu_d$.
    Then, for each $k=1,\cdots,p-1$, under the null hypothesis $\H_{0,k}$, it holds that
    \begin{align*}
        (\sqrt{n}\,\hat{\lambda}_k, \cdots, \sqrt{n}\,\hat{\lambda}_p) \darrow (\mu_1, \cdots, \mu_d).
    \end{align*}
\end{lemmax}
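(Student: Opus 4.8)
The plan is to show that the $\sqrt{n}$-scaled bottom $d=\dim(\N_k)=p-k+1$ eigenvalues of $\hat{\Sigma}_X$, namely $\sqrt{n}\,\hat{\lambda}_k,\dots,\sqrt{n}\,\hat{\lambda}_p$, are asymptotically equivalent to the eigenvalues of the null block $A:=P\hat{\Sigma}_X P$ restricted to $\N_k$, whose limiting law is already identified in Lemma \ref{lemma:projected-CLT}, and then to conclude by the continuous mapping theorem. Write $\hat{\Sigma}_X$ in block form with respect to the orthogonal decomposition $\R^p=\N_k\oplus\N_k^{\perp}$, that is $\hat{\Sigma}_X=\left(\begin{smallmatrix} A & B \\ B^{\top} & C\end{smallmatrix}\right)$, where $A=P\hat{\Sigma}_X P$ is $d\times d$, $C=Q\hat{\Sigma}_X Q$ is $(k-1)\times(k-1)$ with $Q=I-P$, and $B=P\hat{\Sigma}_X Q$. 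The three blocks have contrasting magnitudes. By Lemma \ref{lemma:projected-CLT} together with $\sqrt{n}\,P\Sigma_X^{(n)}P\to 0$, which is forced by the local null structure (Assumption \ref{ass:local-null}), one has $\sqrt{n}\,A\darrow G_{\N_k}$, so $A=O_p(n^{-1/2})$. One also has $B=O_p(n^{-1/2})$: its stochastic part is $O_p(n^{-1/2})$ by Lemma \ref{lemma:LLN-CLT}, and its population part $P\Sigma_X^{(n)}Q$ is $o(n^{-1/2})$ under Assumption \ref{ass:local-null}, since $P\Sigma_X^{(\infty)}Q=0$ and the deviation stems only from the trailing eigenvalues of order $o(n^{-1/2})$. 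Finally, by Lemma \ref{lemma:limiting-nonzero-eigenvalue} and Assumption \ref{ass:popu-eigen-gap}, $C\parrow Q\Sigma_X^{(\infty)}Q$, which is positive definite with eigenvalues at least $c_0$, so $C$ is invertible with $\|C^{-1}\|=O_p(1)$ with probability approaching one.

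To compare the null block with the full matrix I would use the block $LDL^{\top}$ factorization. With $T=\left(\begin{smallmatrix} I & 0 \\ -C^{-1}B^{\top} & I\end{smallmatrix}\right)$, well-defined since $C$ is invertible with probability approaching one, one has the exact congruence $T\hat{\Sigma}_X T^{\top}=\left(\begin{smallmatrix}\tilde{S} & 0\\ 0 & C\end{smallmatrix}\right)$ with Schur complement $\tilde{S}=A-BC^{-1}B^{\top}$. Since $B=O_p(n^{-1/2})$ and $\|C^{-1}\|=O_p(1)$, we get $T=I+O_p(n^{-1/2})$ and $\tilde{S}=A+O_p(n^{-1})$. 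The eigenvalues of the block-diagonal matrix are those of $\tilde{S}$ (all $O_p(n^{-1/2})$, hence the $d$ smallest) together with those of $C$ (bounded below by $c_0/2$ with probability approaching one). Because the congruence factor is $I+O_p(n^{-1/2})$, Ostrowski's theorem transfers eigenvalues index by index with a multiplicative distortion $1+O_p(n^{-1/2})$; applied to the $d$ smallest eigenvalues, which are themselves $O_p(n^{-1/2})$, this costs only $O_p(n^{-1})$. Hence $\hat{\lambda}_{k+i-1}$ equals the $i$-th largest eigenvalue of $\tilde{S}$ up to $O_p(n^{-1})$, which in turn equals $a_i+O_p(n^{-1})$, where $a_1\ge\cdots\ge a_d$ denote the eigenvalues of $A$; that is, $\hat{\lambda}_{k+i-1}=a_i+O_p(n^{-1})$ for $i=1,\dots,d$.

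It remains to pass to the limit. The ordered-eigenvalue map on symmetric matrices is $1$-Lipschitz in the operator norm by Weyl's inequality, so the continuous mapping theorem applied to $\sqrt{n}\,A\darrow G_{\N_k}$ yields $\sqrt{n}\,(a_1,\dots,a_d)\darrow(\mu_1,\dots,\mu_d)$, the ordered eigenvalues of $G_{\N_k}$. Combining this with $\sqrt{n}\,\hat{\lambda}_{k+i-1}=\sqrt{n}\,a_i+O_p(n^{-1/2})=\sqrt{n}\,a_i+o_p(1)$ and Slutsky's theorem delivers $(\sqrt{n}\,\hat{\lambda}_k,\dots,\sqrt{n}\,\hat{\lambda}_p)\darrow(\mu_1,\dots,\mu_d)$, as claimed.

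I expect the main obstacle to be the localization step: showing that the bottom eigenvalues of the full matrix agree with those of the null block $A$ to precision $o_p(n^{-1/2})$. A direct appeal to Weyl's inequality or to the Davis--Kahan theorem controls this error only at order $O_p(n^{-1/2})$, which is exactly the scale of interest and therefore useless here. The resolution is the block factorization above, whose whole point is that the cross-block coupling enters the small eigenvalues only through the quadratic Schur complement $BC^{-1}B^{\top}=O_p(n^{-1})$, equivalently that Ostrowski's multiplicative distortion $1+O_p(n^{-1/2})$ is harmless when acting on eigenvalues that are themselves $O_p(n^{-1/2})$. A secondary point requiring care is the justification that the population null block satisfies $\sqrt{n}\,P\Sigma_X^{(n)}P\to 0$ and that the population cross block $P\Sigma_X^{(n)}Q$ is $o(n^{-1/2})$; both follow from Assumption \ref{ass:local-null} once one notes that the trailing population eigenvalues are $o(n^{-1/2})$ and that the associated eigenspaces align with $\N_k$ in the limit.
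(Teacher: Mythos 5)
Your proof is correct, and it takes a genuinely different---and in fact more complete---route than the paper's. The paper's own proof simply writes $\sqrt{n}\,\hat{\lambda}_j = \tau_j^{(n)} + \sqrt{n}(\hat{\lambda}_j - \lambda_j^{(n)})$, notes that $\tau_j^{(n)} \to 0$ and that $\sqrt{n}(\hat{\lambda}_j - \lambda_j^{(n)}) = O_p(1)$ by Weyl's inequality, recalls Lemma \ref{lemma:projected-CLT}, and then concludes by ``combining these and applying the continuous mapping theorem.'' That last step is precisely the localization issue you flag as the main obstacle: Weyl's inequality yields only tightness, and Lemma \ref{lemma:projected-CLT} concerns the projected matrix $P(\hat{\Sigma}_X - \Sigma_X^{(n)})P$, whereas the $\hat{\lambda}_j$ are eigenvalues of the \emph{full} matrix $\hat{\Sigma}_X$; the paper exhibits no continuous map carrying the former into the latter at the $n^{-1/2}$ scale, so its argument never actually identifies the limit as the ordered eigenvalues of $G_{\N_k}$. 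Your block decomposition supplies exactly the missing bridge: the congruence to $\mathrm{diag}(\tilde{S},C)$ shows that the cross block $B$ enters the bottom $d$ eigenvalues only through the Schur correction $BC^{-1}B^{\top} = O_p(n^{-1})$, and Ostrowski's theorem converts the congruence distortion $1+O_p(n^{-1/2})$ into an additive $O_p(n^{-1})$ error on eigenvalues that are themselves $O_p(n^{-1/2})$, which legitimately reduces the claim to the continuous mapping theorem applied to $\sqrt{n}\,A \darrow G_{\N_k}$. Two small remarks. First, a bookkeeping slip: with your $T=\left(\begin{smallmatrix} I & 0 \\ -C^{-1}B^{\top} & I\end{smallmatrix}\right)$ it is $T^{\top}\hat{\Sigma}_X T$, not $T\hat{\Sigma}_X T^{\top}$, that equals $\mathrm{diag}(\tilde{S},C)$; equivalently, take $T=\left(\begin{smallmatrix} I & -BC^{-1} \\ 0 & I\end{smallmatrix}\right)$. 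Nothing downstream is affected. Second, your closing caveat is well placed: the claims $\sqrt{n}\,P\Sigma_X^{(n)}P \to 0$ and $P\Sigma_X^{(n)}Q = o(n^{-1/2})$ require the trailing eigenspaces of $\Sigma_X^{(n)}$ to align with $\N_k$ (under exact alignment the cross term vanishes identically); this alignment is implicit in the paper's reading of Assumption \ref{ass:local-null} and is equally needed by the paper's own use of Lemma \ref{lemma:projected-CLT}, so it is a gap in the paper's assumptions rather than in your proof.
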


\begin{proof}
    Consider $j= k,\cdots,p$.
    Under the null hypothesis $\H_{0,k}$, by the ``local null'' interpretation (Assumption~\ref{ass:local-null}),
    \begin{align*}
         \sqrt{n} \lambda_j^{(n)} = \tau_j^{(n)} \to 0
    \end{align*}
    holds.
    Moreover, by Lemma~\ref{lemma:LLN-CLT} and Weyl's inequality,
    \begin{align*}
        \sqrt{n} (\hat{\lambda}_j - \lambda_j^{(n)}) = O_p(1)
    \end{align*}
    holds.
    Recall also from Lemma~\ref{lemma:projected-CLT} that
    \begin{align*}
        \sqrt{n} \, \vec \left(P_k (\hat{\Sigma}_X - \Sigma_X^{(n)}) P_k\right) \darrow \vec (G_{\N_k}).
    \end{align*}
    Combining these and applying the continuous mapping theorem, we obtain the desired result:
    \begin{align*}
        (\sqrt{n} \, \hat{\lambda}_k, \cdots ,\sqrt{n} \, \hat{\lambda}_p)
            &= \left(\sqrt{n} \lambda_k^{(n)} +  \sqrt{n} (\hat{\lambda}_k - \lambda_k^{(n)}), \cdots, \sqrt{n} \lambda_p^{(n)} +  \sqrt{n} (\hat{\lambda}_p - \lambda_p^{(n)}) \right)\\
            &= \left(\tau_k^{(n)} + \sqrt{n} (\hat{\lambda}_k - \lambda_k^{(n)}),\cdots, \tau_p^{(n)} + \sqrt{n} (\hat{\lambda}_p - \lambda_p^{(n)})\right) \\
            &\darrow  (\mu_1, \cdots, \mu_d).
    \end{align*}
\end{proof}

%%%%%%%%%%%%%%%%%%%%%%%%%%%%%%%%%%%%%%%%%%%%%%%%%%%%%%%%%%%%%%%%%%%%%%%
\begin{assumption}[Wishart-type reference limit experiment]\label{ass:wishart-limit-experiment}
We assume that the limiting joint probability density function of
$(\mu_1,\ldots,\mu_d)$ has the same factorized form as the eigenvalue
density of a Wishart matrix. That is, there exist a constant
$C_d>0$ and a measurable function $h:\mathbb{R}\to[0,\infty)$ such that
\begin{align*}
  f(\mu_1,\ldots,\mu_d)
  = C_d
    \Biggl\{\prod_{i=1}^d h(\mu_i)\Biggr\}
    \prod_{1\le i<j\le d} |\mu_i-\mu_j|.
\end{align*}
\end{assumption}

\begin{remarkx}[Interpretation and scope of Assumption~\ref{ass:wishart-limit-experiment}]
Assumption~\ref{ass:wishart-limit-experiment} is an assumption on the limiting experiment rather than on the exact finite sample distribution of the sample covariance matrix $\hat\Sigma_X$. 
In particular, we do not require $\hat\Sigma_X$ to be Gaussian for each fixed $n$. 
Instead, under the local null hypothesis $\H_{0,k}$, we postulate that the fluctuation of $\hat\Sigma_X$ projected onto the null space $\mathcal{N}_k$ converges, in the sense of asymptotic experiments, to a Wishart-type reference model whose eigenvalue density factorizes as in Assumption~\ref{ass:wishart-limit-experiment}. 
This ``Wishart reference limit'' is precisely what is needed to ensure that the conditioning arguments in Eq.~(B.3) of \cite{Choi-Taylor-Tibshirani-2017}, which rely on the cancellation of the interaction term $\prod_{i<j}|\mu_i-\mu_j|$ and on the separability of the marginal component, remain valid for the limiting eigenvalues.

Moreover, results on the asymptotic behavior of sample covariance eigenvalues and scatter estimators indicate that such a reference experiment arises quite naturally in a broad class of non-Gaussian models with finite fourth moments and elliptically symmetric structure. 
For example, \cite{Waternaux-1976} establishes a multivariate central limit theorem for the vector of sample eigenvalues under finite fourth-moment conditions, showing that their joint limiting law is Gaussian while still inheriting the usual Jacobian structure associated with the eigen decomposition. 
In a complementary direction, \cite{Tyler-1983} considers affine-invariant scatter estimators under elliptical distributions and proves that the suitably normalized estimation error converges to an orthogonally invariant Gaussian symmetric matrix; 
the eigenvalues of such matrices have densities of the form composed of symmetric function and Vandermonde factor, which underpins the Wishart-type structure assumed here. 
Then, perturbation bounds of \cite{Yu-Wang-Samworth-2015} imply that, under our eigenvalue gap condition (Assumption~\ref{ass:popu-eigen-gap}), the eigenspaces and eigenvalues of the projected sample covariance $P_k \hat\Sigma_X P_k$ are asymptotically stable functions of the Gaussian limit in operator norm. 
Taken together, these results justify viewing Assumption~\ref{ass:wishart-limit-experiment} as a convenient Wishart reference limit that captures the local behavior of the null eigenvalues for a wide class of elliptically distributed designs with finite fourth moments, rather than as a strong finite sample Gaussianity requirement.
\end{remarkx}

\begin{lemmax}[Limiting behavior of test statistic]\label{lem:ratio-as-survival}
Under Assumptions~\ref{ass:local-null}-\ref{ass:wishart-limit-experiment}, it holds that, as $n \to \infty$,
\begin{align}\label{eq:reference-limit}
    \hat{\S}_{k,0}
        \darrow \frac{
            \int_{\mu_1}^{\infty} f\!\left(v\,\middle|\,\mu_2,\cdots,\mu_d\right)\,dv
            }{
            \int_{\mu_2}^{\infty} f\!\left(v\,\middle|\,\mu_2,\cdots,\mu_d\right)\,dv
            }
\end{align}
where $f_n(\cdot\,|\,\mu_2,\cdots,\mu_d)$ denotes the conditional density of $\mu_1$ given $(\mu_2,\ldots,\mu_d)$, which on the event $\{\mu_1>\mu_2\}$ takes the form of
\[
f(\mu_1 \mid \mu_2, \dots, \mu_d)\ \propto\ h(\mu_1)\ \prod_{j=2}^d \bigl|\mu_1 - \mu_j\bigr|,
\]
with $h(\cdot)$ denoting the term which is not depending on $(\mu_2,\ldots,\mu_d)$.
\end{lemmax}

\begin{proof}
Recall, from Lemmas~\ref{lemma:LLN-CLT}–\ref{lemma:limiting-null-eigenvalue}, that the active eigenvalues $(\hat\lambda_1,\ldots,\hat\lambda_{k-1})$ converge to deterministic limits and the (scaled) null eigenvalues $(\sqrt n\,\hat\lambda_k,\ldots,\sqrt n\,\hat\lambda_p)$ converges in distribution to $(\mu_1,\ldots,\mu_d)$. In particular, Lemma~\ref{lemma:sample-eigen-gap} ensures gaps between the active sample eigenvalues, so that, uniformly over $u\in[\hat\lambda_{k+1},\hat\lambda_{k-1}]$, the factor $\prod_{j<k}|u-\hat\lambda_j|$ equals $C_k\{1+o_p(1)\}$ for a deterministic $C_k>0$. Hence this contribution cancels between the numerator and denominator of \eqref{eq:random-csv-stat}, and the statistic reduces up to $o_p(1)$ to a ratio of integrals depending only on $(\mu_{1}, \cdots, \mu_{d})$.

Under Assumption~\ref{ass:wishart-limit-experiment}, the joint density of $(\mu_1,\cdots,\mu_d)$ has the Vandermonde factor. Relying on the arguments of \citet[Sec.~2.2 and Appendix~B]{Choi-Taylor-Tibshirani-2017}, conditioning on $(\mu_2,\ldots,\mu_d)$ yields a conditional density for $\mu_1$ of the form $h(\mu_1)\prod_{j=2}^d|\mu_1-\mu_j|$ on $\{\mu_1>\mu_2\}$.

\end{proof}

\begin{propx}[Asymptotically exact size control]\label{prop:Global-size-ctrl}
    Under Assumptions \ref{ass:local-null}-\ref{ass:wishart-limit-experiment}, the proposed test statistic \eqref{eq:random-csv-stat} satisfies that, for each $k=1,\cdots,p-1$, 
    \begin{align}
        \hat{\S}_{k,0} \darrow U(0,1).
        \label{eq:global-size-ctrl}
    \end{align}
    under the null hypothesis $\H_{0,k}$.
\end{propx}

\begin{proof}[Proof]
The argument mirrors the proof of \citet[Theorem~2.1]{Choi-Taylor-Tibshirani-2017}. By Lemma~\ref{lem:ratio-as-survival}, the test statistic admits the representation
\begin{align*}
\hat{\S}_{k,0} \darrow 1-F\!\bigl(\mu_1\,\big|\,\mu_2, \dots, \mu_d\bigr),
\end{align*}
where $F(\cdot\,|\,\mu_2, \dots, \mu_d)$ is the conditional distribution function of the leading null eigenvalue $\mu_1$ given the remainder $(\mu_2, \dots, \mu_d)$. Since a conditional distribution function evaluated at its own random argument is conditionally uniform, the conditional probability integral transform yields that $1-F(\mu_1\,|\,\mu_2, \dots, \mu_d)$ is conditionally $U(0,1)$. 
\end{proof}

The convergence \eqref{eq:global-size-ctrl} yields asymptotically exact type~I error controls for testing \eqref{eq:null-and-alter-hypo}.

%%%%%%%%%%%%%%%%%%%%%%%%%%%%%%%%%%%%%%%%%%%%%%%%%%%%%%%%%%%%%%%%%%%%%%%
\section{Simulation examples}\label{sec:simulation}

We evaluate the finite-sample performance of the proposed method using a data generating process characterized by heavy-tailed, non-Gaussian innovations. 

Let $p$ denote the dimension of $X$. The population covariance matrix $\Sigma_X \in \mathbb{R}^{p \times p}$ is constructed as $\Sigma_X = Q\Lambda Q^\top$,
where $Q$ is a random orthogonal matrix and $\Lambda=\mathrm{diag}(\lambda_1,\ldots,\lambda_p)$. The eigenvalues follow a spiked model:
\begin{align*}
    \lambda_i &= 10 - 5\left(\frac{i-1}{k_0-1}\right), \quad i=1,\ldots,k_0,\\
    \lambda_i &= 1, \quad i=k_0+1,\ldots,p,
\end{align*}
where $k_0$ denotes the true rank. 

To introduce non-Gaussianity, we generate data $X_i = \Sigma_X^{1/2} Z_i$, where the vector $Z_i = (Z_{i,1}, \dots, Z_{i,p})^\top$ consists of independent standardized Student-$t$ random variables. Specifically,
\[
    Z_{i,j} = \sqrt{\frac{\nu-2}{\nu}} \, T_{i,j}, \quad T_{i,j} \sim t_\nu,
\]
where $t_\nu$ denotes the Student-$t$ distribution with $\nu$ degrees of freedom. We set $\nu = 5$. This choice ensures that $\E[Z_{i,j}]=0$, $\Var(Z_{i,j})=1$, and the fourth moment exists. 
This design allows us to assess whether the proposed test maintains type~I error control even when the Gaussian assumption is violated.

Table~\ref{tab:simulation_results_p_is_10} reports the results for $(p,k_0)=(10,3),(10,4),(10,5),(10,6),(10,7)$, 
%Table~\ref{tab:simulation_results_p_is_50} for $(p,k_0)=(50,5),(50,15),(50,25),(50,35),(50,45)$, and 
%Table~\ref{tab:simulation_results_p_is_100} for $(p,k_0)=(100,10),(100,30),(100,50),(100,70),(100,90)$, 
with sample sizes $n\in\{100,1000,10000,100000\}$, and nominal sizes $\alpha\in\{0.01,0.05,0.10\}$. 
Each configuration is based on 1000 replications. In this setting, over-estimation corresponds to a type~I error (spuriously adding components), whereas under-estimation corresponds to a type~II error (failing to detect existing components). Consistent with our theory, the empirical size approaches the nominal level as $n$ increases, indicating effective size control in large samples. Although a formal proof is not yet available, the simulation evidence further suggests that, under local alternatives, the power tends to one asymptotically.

\begin{table}[htbp]
\centering
\caption{Simulation results with $p=10$ (1000 iterations)}
\label{tab:simulation_results_p_is_10}

\resizebox{\textwidth}{!}{%
\begin{tabular}{cl rrr rrr rrr}
\toprule
\multirow{2}{*}{True Rank} & \multirow{2}{*}{Sample Size}
& \multicolumn{3}{c}{Results ($\alpha=0.01$)}
& \multicolumn{3}{c}{Results ($\alpha=0.05$)}
& \multicolumn{3}{c}{Results ($\alpha=0.10$)} \\
\cmidrule(lr){3-5}\cmidrule(lr){6-8}\cmidrule(lr){9-11}
& & $\hat{k} < k_0$ & $\hat{k} = k_0$ & $\hat{k} > k_0$ & $\hat{k} < k_0$ & $\hat{k} = k_0$ & $\hat{k} > k_0$ & $\hat{k} < k_0$ & $\hat{k} = k_0$ & $\hat{k} > k_0$ \\
\midrule

\multirow{4}{*}{$k_0=3$}
& $n=100$     & 55 & 904 & 41  & 26 & 827 & 147  & 17 & 750 & 233 \\
& $n=1,000$   &  0 & 961 & 39  &  0 & 892 & 108  &  0 & 830 & 170 \\
& $n=10,000$  &  0 & 980 & 20  &  0 & 920 &  80  &  0 & 866 & 134 \\
& $n=100,000$ &  0 & 981 & 19  &  0 & 932 &  68  &  0 & 884 & 116 \\
\midrule

\multirow{4}{*}{$k_0=4$}
& $n=100$     & 218 & 749 & 33 & 118 & 749 & 133 & 77 & 691 & 232 \\
& $n=1,000$   &   2 & 967 & 31 &   1 & 900 &  99 &  0 & 837 & 163 \\
& $n=10,000$  &   0 & 985 & 15 &   0 & 935 &  65 &  0 & 881 & 119 \\
& $n=100,000$ &   0 & 991 &  9 &   0 & 940 &  60 &  0 & 890 & 110 \\
\midrule

\multirow{4}{*}{$k_0=5$}
& $n=100$     & 528 & 450 & 22 & 308 & 594 &  98 & 225 & 587 & 188 \\
& $n=1,000$   &   9 & 970 & 21 &   6 & 902 &  92 &   2 & 850 & 148 \\
& $n=10,000$  &   0 & 987 & 13 &   0 & 947 &  53 &   0 & 885 & 115 \\
& $n=100,000$ &   0 & 989 & 11 &   0 & 953 &  47 &   0 & 897 & 103 \\
\midrule

\multirow{4}{*}{$k_0=6$}
& $n=100$     & 857 & 140 &  3 & 596 & 348 &  56 & 458 & 412 & 130 \\
& $n=1,000$   &  35 & 948 & 17 &  23 & 892 &  85 &  20 & 833 & 147 \\
& $n=10,000$  &   0 & 990 & 10 &   0 & 951 &  49 &   0 & 897 & 103 \\
& $n=100,000$ &   0 & 994 &  6 &   0 & 953 &  47 &   0 & 909 &  91 \\
\midrule

\multirow{4}{*}{$k_0=7$}
& $n=100$     & 991 &   9 &  0 & 889 &  96 &  15 & 766 & 170 &  64 \\
& $n=1,000$   & 131 & 845 & 24 &  69 & 860 &  71 &  49 & 816 & 135 \\
& $n=10,000$  &   0 & 994 &  6 &   0 & 963 &  37 &   0 & 892 & 108 \\
& $n=100,000$ &   0 & 994 &  6 &   0 & 952 &  48 &   0 & 911 &  89 \\
\bottomrule
\end{tabular}%
}

\begin{minipage}{0.95\linewidth}
\vspace{2mm}
\footnotesize
\textbf{Note:} $\hat{k}$ denotes the estimated rank. Each number in the cells represents the count of under-estimated ($\hat{k} < k_0$), correctly estimated ($\hat{k} = k_0$), and over-estimated ($\hat{k} > k_0$) ranks out of 1,000 iterations.
\end{minipage}

\end{table}

%\input{table-p-is-50}
%\input{table-p-is-100}

%%%%%%%%%%%%%%%%%%%%%%%%%%%%%%%%%%%%%%%%%%%%%%%%%%%%%%%%%%%%%%%%%%%%%%%
%\section{Empirical illustrations}\label{sec:empirical}

%\noindent
%\textcolor{purple}{TO BE UPDATED}

%%%%%%%%%%%%%%%%%%%%%%%%%%%%%%%%%%%%%%%%%%%%%%%%%%%%%%%%%%%%%%%%%%%%%%%
\section{Conclusions}\label{sec:conclusions}
This paper revisits the classical problem of determining the number of principal components through the lens of selective inference. We formalized the rank selection task as a sequence of nested tests and proposed a statistic that depends on the eigenvalues of the sample covariance matrix. In contrast to fixed design approaches, our analysis treats the design as random and avoids parametric distributional assumptions, relying instead on asymptotic theory and a locally defined null. Within this framework, we established asymptotically exact type~I error controls in the sequential testing procedure via a Wishart reference limit experiment that justifies the relevant conditioning arguments. Simulation results support the theoretical predictions: empirical sizes approach nominal levels as the sample size increases, and the procedure behaves robustly in non-Gaussian settings.

Several directions merit further investigation. First and foremost, while the simulation results indicate that the power under local alternatives increases with the sample size, a formal proof that the power converges to one asymptotically remains to be established. Providing such a result would sharpen our understanding of the method’s efficiency and deliver stronger guarantees for asymptotic optimality under local alternatives. 

Taken together, our results contribute a conceptually simple and practically implementable rank selection procedure grounded in selective inference. 
Our framework provides a useful bridge between classical PCA practice in econometrics and modern inferential tools, and we view the formal characterization of power under local alternatives as a future task.

%%%%%%%%%%%%%%%%%%%%%%%%%%%%%%%%%%%%%%%%%%%%%%%%%%%%%%%%%%%%%%%%%%%%%%%
\bibliographystyle{apalike} 
\bibliography{refs}

%%%%%%%%%%%%%%%%%%%%%%%%%%%%%%%%%%%%%%%%%%%%%%%%%%%%%%%%%%%%%%%%%%%%%%%
%\newpage
%----------------------------------------------------------------%
%                      Supplement                                %
%----------------------------------------------------------------%
%\input{preamble-appendix}

%\begin{center}
%{\Large\bf Online Appendix for\\ ``Principal component regression in econometrics:\\a selective inference perspective''}\\
%\vspace{1cm}
%\large{Yasuyuki Matsumura$^{\clubsuit}$ and Chisato Tachibana$^{\spadesuit}$}\\
%\vspace{1cm}
%$^{\clubsuit}$Graduate School of Economics, Kyoto University\\
%$^{\spadesuit}$Graduate School of Economics, University of Tokyo\\
%\vspace{1cm}
%{\today}\\
%\vspace{1cm}
%\end{center}

\end{document}